\documentclass[10pt]{amsart}
\usepackage[utf8]{inputenc}
\usepackage{amscd}
\usepackage{amssymb}
\usepackage{pstricks}
\usepackage{graphicx}
\usepackage{float}
\usepackage{todonotes}
\usepackage{subcaption}
\usepackage{xcolor}
\usepackage{textcomp}
\usepackage{cancel}
\usepackage{comment}
\usepackage[mathlines]{lineno}
\usepackage{todonotes}
\newtheorem{theorem}{Theorem}
\newtheorem{proposition}{Proposition}

\newtheorem{definition}{Definition}

\newtheorem{remark}{Remark}

\numberwithin{equation}{section}

\title[Generalization of the Levins metapopulation model]{A generalization of the Levins metapopulation model to time varying colonization and extinction rates}
\title[Generalization of the Levins metapopulation model]{Generalizing the Levins metapopulation model to time varying colonization and extinction rates}

\author[Robledo]{Gonzalo Robledo}
\author[Bustamante]{Ramiro Bustamante}
\email{grobledo@uchile.cl,rbustama@uchile.cl}
\address{Departamento de Ciencias Ecol\'ogicas, Instituto de Ecolog\'ia y Biodiversidad, Universidad de Chile. Las Palmeras 3425, \~Nu\~noa, Santiago, Chile.}
\address{Departamento de Matem\'aticas, Universidad de Chile, Las Palmeras 3425, \~Nu\~noa, Santiago, Chile.}
\keywords{Metapopulations, Levins model, Time varying differential equations, Upper and lower averages}
\thanks{The first author is supported by Fondecyt Regular 1210733 (ANID-Chile)}
\subjclass{34K13, 34K60, 92D25}

\begin{document}

\begin{abstract}
The metapopulation theory explores the population persistence in fragmented habitats by considering a balance between the extinction of local populations and recolonization of empty sites. In general, the extinction and colonization rates have been considered as constant parameters and the novelty of this paper is to assume that they are subject to deterministic variations. We noticed that an averaging approach 
proposed by C. Puccia and R. Levins can be adapted to construct the upper and lower averages of the difference between the extinction and colonization rates, whose sign is useful to determine
either the permanence or the extinction of the metapopulation. In fact, we use these averages to revisit the classical model introduced by R. Levins. From a mathematical perspective, these averages can be seen as Bohl exponents whereas the corresponding analysis is carried out by using tools of non autonomous dynamics. Last but not least, compared with the Levins model, the resulting dynamics of the time varying model shares the persistence/extinction scenario when the above stated upper and lower averages have the same sign but also raises open questions about metapopulation persistence in the case of the averages have different sign.

\end{abstract}

\maketitle
\section{Introduction}

The theory of metapopulations is a central framework in Ecology which emerged at the end of sixties \cite{Levins}, concomitantly with other fundamental theories such as   dynamic equilibrium theory and  island biogeography theory developed by MacArthur, Wilson and Simberloff \cite{MAW0},\cite{Simberloff}. Contrarily to the population models based on the difference between the processes of birth and death rates of a single species; the metapopulation research is focused on describing the proportion of the patches occupied by a species as a consequence of the dynamical balance between colonization and extinction \cite{HS}.



In general, metapopulation theory has worked with the  implicit or explicit assumption of a deterministic environment where the colonization and extinction processes are constant parameters. Nevertheless, in the real world,  birth rates, carrying capacities, competition coefficients  exhibit fluctuations in time \cite[p.109]{May}.  These criticisms have prompted to consider the time variability of the environment in several topics of theoretical ecology \cite{Fidino,Klausmeier,Kremer,Vesipa,White}. However, to the best of our knowledge, there are no metapopulation works on this vein with one exception which focuses on the almost periodic case \cite{Amster}. The seminal metapopulation model proposed by Richard Levins in 1969, defined the concept of metapopulation and described the dynamics by means of an ordinary differential equation, which included colonization and extinction rates as constants. 


An outcome of the Levins model is that the fraction of occupied patches converges exponentially to a static positive equilibrium when the colonization rate $c>0$ is higher that the extinction rate ($e > 0$);  in the reversed cases, the fraction of occupied patches is driven to zero. Although the Levins model has been generalized in multiple ways, we consider it relevant to take into account an additional one: instead of suppose constant colonization and extinction rates, we will assume that they are subject to deterministic temporal variability and are described by time--varying functions $c(t) > 0$ and $e(t) > 0$ respectively.


We stress that in \cite{Levins3} and \cite{Puccia-0,PC}, C. Puccia and R. Levins take into account the parameters variability mentioned above. They also enunciates a number of criticisms of equilibrium--based ecologic studies and proposes to consider the averages of ecologic variables. This idea is promising but, curiously, was not applied to its own metapopulation model.


In this work, we recall that there are not a univocal definition of average for continuous variables but the idea of C. Puccia and R. Levins can be reformulated in terms of; a formal definition will be given later; upper and lower averages of $e(t)-c(t)$, namely, the difference between extinction and colonization forces. These upper and lower averages coincides with the so called \textit{Bohl exponents}, which are in the core of the theory of nonautonomous dynamical systems, whose application in life sciences is on its preliminary stages \cite{Kloeden2}. Moreover, the outcome of our time varying version of the Levins model is that if the upper average of $e(t)-c(t)$ is negative then the proportion of patches is attracted either by a positive, bounded and continuous solution whereas is driven to zero if the lower average is positive.


The structure of the paper is as follows: In section 2 we revisit the Levins model and describe our generalization. In section 3 we describe the upper and lower average of the difference between extinction and colonization rate and state as the averages implies either the extinction of the metapopulation or its convergence towards a positive and continuous solution. We give a discussion in Section 4 and the Appendix provides formal mathematical proofs of our outcomes stated in section 3.

\section{The Levins metapopulation model and a generalization for time varying environments}

The theory of metapopulations assumes the existence of fragmented habitats and the concept of \textit{patch} plays
an essential role to describe this fragmentation. According to \cite{Chesson}, a patch is \textit{an area of suitable habitat for a particular 
species or particular collection of species, ideally bounded by unsuitable habitat or habitat with different physical properties}. Now, when considering a single species distributed in a fragmented habitat, the populations inhabiting at each specific patch are called \textit{local populations}. A \textit{metapopulation} is a collection of local populations in a region.

A classical or strict metapopulation \cite{Chesson} satisfies the fo\-llowing conditions: i) the \textit{patches} are  partially isolated between them and are able of autosustaining for several generations in absence of colonization for other local populations, ii) the local population extinction occurs in a time scale considering many generations, iii) migration between local populations leads to reestablishment of local populations following local extinction.

\subsection{The classical Levins model}
The concept of classical metapopulation has been introduced in 1969 by R. Levins \cite{Levins,Levins2}: A single species is assumed to inhabit a set of patches, which are assumed to be spatially homogeneous and having the same level of accessibility. The proportion of patches which 
are effectively occupied is denoted by $p\in [0,1]$ and its variation in time is go\-ver\-ned by a dynamics between two opposing forces: colonization and extinction, which can be described in a very general way as follows  \cite[p.391]{Case}:
\begin{equation}
\label{levins-00}
p'=\textnormal{[gains from colonization]}-\textnormal{[losses due to extinction]},
\end{equation}
where the gains from colonization are described by the product
of the intrinsic patch colonization rate $c>0$ with the fractions of occupied patches $p$
and unoccupied patches $1-p$. On the other hand, the extinction per--capita rate
is assumed to be constant, namely $e>0$. In consequence, the classical Levins model is described
by the differential equation:
\begin{equation}
\label{levins-0}
p'=cp(1-p)-ep.
\end{equation}

If $p(0)\in (0,1]$, the evolution of the solutions of (\ref{levins-0}) in the future is uniquely
determined by the sign of $e-c$ and is described by the following behavior:
\begin{equation}
\label{NALM}
\lim\limits_{t\to +\infty}p(t)=\left\{
\begin{array}{rcl}
\displaystyle 0  & \textnormal{if}&   c \leq e \\\\
\displaystyle p^{*}:=1-\frac{e}{c}  &\textnormal{if}  & e<c. 
\end{array}\right.
\end{equation}

\begin{remark}
\label{outcome}
When defining $\mathcal{D}:=e-c$, namely, the difference between extinction and colonization, the above limit can be seen as a persistence/extinction scenario characterized by an attractive equilibrium, which is determined by the excluding 
conditions $\mathcal{D}<0$, $\mathcal{D}>0$ and  $\mathcal{D}=0$ that represent either the dominance of colonization over extinction or its opposites. In addition:
\begin{enumerate}
\item[i)] when $\mathcal{D}<0$, the fraction of occupied patches converges
towards the constant $p^{*}=1-(e/c)$ at exponential rate, 
\item[ii)] when $\mathcal{D}>0$, the fraction of occupied patches is driven to zero at exponential rate,
\item[iii)] when  $\mathcal{D}=0$, the fraction of occupied patches is driven to zero but not necessarily at an exponential rate. 
\end{enumerate}
\end{remark}

\begin{remark}
\label{outcome3}
In \cite{Levins2}, Levins considers the extinction rate as a random variable of the frequency of a gene in the population
whose average is $\overline{e}$ and its variance is $\sigma_{e}^{2}$. If $c>\overline{e}+\sigma_{e}^{2}$ it is proved the existence of
a modal distribution with a peak at $p=1-(\overline{e}+\sigma_{e}^{2})/c$. This idea has been revisited in several works by considering more general stochastic extinction rates. We refer the reader to \cite{Nothaab} for recent results.
\end{remark}

\subsection{Novelty of this work: A generalization by considering time--varying per--patch colonization rates}
In \cite[p.765]{Levins3}, R. Levins enumerates several critical remarks to the coexistence theory based on equilibria and a key
one states that \textit{equilibrium theory is simply innapropriate when the environment is varying, for example seasonally}. This critical stance towards the notion of equilibrium is shared by Etienne and Nagelkerke \cite{Etienne,Etienne2} which compares  the classic deterministic Levins model (\ref{levins-0}) with an analogous stochastic model analogous carried out by Frank $\&$ Nagelkerke \cite{Frank} and Ovaskainen \cite{Ovaskainen}.  Nevertheless, to the best of our knowledge, the Levins model with deterministic time varying colonization and extinction rates has not been studied previously.

In this context, we will generalize the Levins model (\ref{levins-0}) by considering time varying colonization and extinction rate such that the equation (\ref{levins-0}) becomes a time--varying or nonautonomous differential equation:
\begin{equation}
\label{Levins1}
p'=c(t)p(1-p)-e(t)p,
\end{equation}
where $c\colon [0,+\infty)\to [0,+\infty)$ and $e\colon [0,+\infty) \to [0,+\infty)$
are bounded and continuous functions. Moreover, we will assume that colonization rate is strictly positive, that is,
there exists $c_{*}>0$ such that:
\begin{equation}
\label{colonization-rest}
\inf\limits_{t\geq 0}c(t)\geq c_{*}>0.
\end{equation}

The goal of this article is to study the evolution of the solutions of the time varying
Levin's model (\ref{Levins1}) by generalizing the above mentioned threshold condition.
A first attempt to study (\ref{Levins1}) has been done by the second author \cite{Amster} which considers almost periodic colonization and extinction rates. As it has been 
stated in \cite{Kloeden2}, several techniques and methods from time varying ordinary differential equations can be employed to study models arising from life sciences and this article can be seen on this trend.

\section{Averages for the colonization/extinction rates and persistence/extinction outcomes}

In order to study the equation (\ref{Levins1}) and the properties of its solutions, it will be useful to define the difference between the extinction and colonization rates at time $t$ by the function $t\mapsto \mathcal{D}(t)$ described by:
\begin{equation}
\label{D}
\mathcal{D}(t):=e(t)-c(t) \quad \textnormal{for any $t\geq 0$}.
\end{equation}

Note that the sign of $\mathcal{D}(t)$ describes if either the colonization or the extinction are dominant forces at time $t$. From this perspective, if $\mathcal{D}(t)$ has infinite changes of sign then there will be infinite intervals where the colonization is dominated by the extinction and vice versa and, consequently, a persistence/extinction scenario cannot be directly deduced as in the classical case described by the equation (\ref{levins-0}). 

In this article we point out that, instead of considering the sign of $\mathcal{D}(t)$, a more beneficial approach is to consider mathematical objects which emulates the \textit{average} of the extinction/colonization balance described by $t\mapsto \mathcal{D}(t)$, which allows us to raise two fundamental questions:
there will be always a fraction of occupied patches if the average of $\mathcal{D}(\cdot)=e(\cdot)-c(t)$ is negative?. The metapopulation will be driven to the extinction if the average of $e(\cdot)-c(\cdot)$ is positive?.

The preceding questions are a natural generalization of the persistence/extinction scenario described by the equation (\ref{NALM}). However, despite its apparent simplicity, we face a problem: there are no univocal definition 
for the average of a continuous function and there exist several possible ways to define it. 

\subsection{Averaging approaches}
In \cite{Puccia-0} and \cite[Ch.7]{PC}, C.J. Puccia and R. Levins proposed a formal definition for the average or \textit{expected value} of a bounded continous function, which applied to the difference between extinction and colonization $\mathcal{D}(\cdot)$, would be:
\begin{equation}
\label{APL}
E_{t}(\mathcal{D})=\frac{1}{t}\int_{0}^{t}\mathcal{D}(\tau)\,d\tau \quad \textnormal{for any $t>0$}.
\end{equation}

The usefulness of the expected value is nicely described by several examples and applications
developed in \cite{Puccia-0} and \cite{PC}. Surprisingly, to the best of our knowledge, 
the expected value approach seems to have not been used to the study of metapopulations. In 
addition, a careful reading of the Puccia and Levins work shows that they have the underlying assumption that
the identity (\ref{APL}) must be considered with large enough values of $t$. In fact, this assumption
was explicit on \cite{Levins3} and implicit in \cite[p.195]{PC} where the authors stated that \textit{"$\ldots$ for the rest of this section we will take average in the limit and drop the subscript"}. 

We point out that the expected value (\ref{APL}) assumes an initial time
$t_{0}=0$ but the dynamics of the time varying systems can be dependent
of the initial time. In this context a natural generalization adapted for any initial time $t_{0}=s$ could be
\begin{equation}
\label{APL2}
E_{t,s}(\mathcal{D})=\frac{1}{t-s}\int_{s}^{t}\mathcal{D}(\tau)\,d\tau \quad \textnormal{for any $t>s\geq 0$}.
\end{equation}

The expected value $E_{t}(\mathcal{D})$ and its generalization $E_{t,s}(\mathcal{D})$ are related the solutions of the linear time varying diferential equation:
\begin{equation}
\label{lin-v}
\dot{v}=\mathcal{D}(t)v \quad \textnormal{or equivalently} \quad \dot{v}=[e(t)-c(t)]v \quad \textnormal{for any $t\geq 0$},
\end{equation}
whose transition operator is:
\begin{equation}
\label{MTD}
\Phi(t,s)=\exp\left(\int_{s}^{t}\mathcal{D}(\tau)\,d\tau\right) \quad \textnormal{for any $t,s\geq 0$}.
\end{equation}

The solution $t\mapsto v(t)$ of (\ref{lin-v}) passing through $v_{0}$ at time $t_{0}$ is usually denoted as follows: 
$$
v(t)=\Phi(t,t_{0})v_{0} \quad \textnormal{for any $t,t_{0}\geq 0$},
$$
then, when considering $t_{0}\geq 0$, the solutions of (\ref{lin-v}) can be written as 
\begin{equation}
\label{LevAv}
v(t)=e^{E_{t}(\mathcal{D})t}v(0) \quad \textnormal{or} \quad
v(t)=e^{E_{t,t_{0}}(\mathcal{D})(t-t_{0})}v(t_{0}) \quad \textnormal{for any $t>t_{0}\geq 0$}.
\end{equation}

The solutions of (\ref{lin-v}) describes the temporal evolution of the opposing forces of colonization and extinction. In particular, notice that the solutions (\ref{LevAv}) are dependent of the sign of $E_{t}(\mathcal{D})$ and $E_{t,s}(\mathcal{D})$: decreases when the expected value is negative and increases when is positive.

A strong shortcoming of the Puccia and Levins approach is that the expected value described by eq. (\ref{APL}) (\textit{resp}. eq. (\ref{APL2})) could not be always well defined when $t$ (\textit{resp}. $t-s$) tends to infinity. Nevertheless, noteworthy mathematical tools have been developed in order to study the solutions of the linear equation (\ref{lin-v}) which generalize the notion of expected value, the first one is given by the \textit{lower Lyapunov exponents} and \textit{upper Lyapunov exponents} of $\mathcal{D}(\cdot)$ described by the limits:
\begin{equation}
\label{Lyapunov}
\lambda^{-}(\mathcal{D})=\liminf\limits_{t\to +\infty}\frac{1}{t}\int_{0}^{t}\mathcal{D}(\tau)\,d\tau \quad \textnormal{and} \quad
\lambda^{+}(\mathcal{D})=\limsup\limits_{t\to +\infty}\frac{1}{t}\int_{0}^{t}\mathcal{D}(\tau)\,d\tau.
\end{equation}

Note that Lyapunov's exponents agree with the idea of Puccia and Levins insofar as they consider the expected value of $\mathcal{D}(t)$ by making $t\to +\infty$. However, since such a limit may not exist, upper and lower limits must be considered.

A second approach to the expected value of $\mathcal{D}(\cdot)$ is given by the \textit{upper Bohl exponent} and the \textit{lower Bohl exponent} which, applied to the difference between extinction and colonization $\mathcal{D}(\cdot)$, can be denoted as
\begin{equation}
\label{Bohl}
\beta^{-}(\mathcal{D})=\liminf\limits_{t-s\to +\infty}\frac{1}{t-s}\int_{s}^{t}\mathcal{D}(r)\,dr
\end{equation}
and
\begin{equation}
\label{Bohl-2}
\beta^{+}(\mathcal{D})=\limsup\limits_{t-s\to +\infty}\frac{1}{t-s}\int_{s}^{t}\mathcal{D}(r)\,dr.
\end{equation}

Similarly as the Lyapunov exponents, we can see that Bohl's exponents also allow us to revisit the generalization the expected value of Puccia and Levins described by $E_{t,s}(\mathcal{D})$ which is adapted for any initial time. In fact:
\begin{itemize}
\item[$\bullet$] If $\beta^{+}(\mathcal{D})<0$ it follows that that, notwithstanding any variations in the sign of the function $t\mapsto \mathcal{D}(t)=e(t)-c(t)$, the difference between colonization and extinction forces
is negative from an upper average perspective, that is, the colonization forces are dominating in the metapopulation.
\item[$\bullet$] If $\beta^{-}(\mathcal{D})>0$ it follows that the difference between colonization and extinction forces
is positive from an upper average perspective, that is, the extinction forces are dominating in the metapopulation.
\end{itemize}

The limits (\ref{Bohl}) and (\ref{Bohl-2}) will be respectively called 
as \textit{lower average} and \textit{upper average} of $\mathcal{D}(\cdot)$. As we have said, Lyapunov and Bohl exponents are essential tools to study linear systems as (\ref{lin-v}) and we would like to refer to \cite{Barreira,Kloeden} for a deep treatment of Lyapunov exponents
and \cite{APR,barabanov2001,Bohl,DK} for additional details about Bohl exponents. In particular, see \textit{e.g.} \cite{APR}, it can be proved that
\begin{displaymath}
\beta^{-}(\mathcal{D})\leq \lambda^{-}(\mathcal{D})\leq \lambda^{+}(\mathcal{D}) \leq \beta^{+}(\mathcal{D}),    
\end{displaymath}
and the identity $\beta^{-}(\mathcal{D})=\beta^{+}(\mathcal{D})$ is verified in special cases, for example when $\mathcal{D}(\cdot)$ is constant, periodic or almost periodic. 

Despite our initial criticism to the notion of expected value $E_{t}(\mathcal{D})$ introduced 
by Puccia and Levins, we have to say that this is an idea that, behind a false naivety, hides great depth.
In particular, the generalization of the expected values of $\mathcal{D}(\cdot)$ by Bohl exponents will allow to study
the metapopulation described by (\ref{Levins1}) which is because the sign of these exponents determines the exponential stability/instability of the solutions of (\ref{lin-v}) and (\ref{Levins1}).

\subsection{Bohl exponents and persistence/extinction results} 
The Bohl exponents $\beta^{+}(\mathcal{D})$ and $\beta^{-}(\mathcal{D})$ allow a partial emulation of the classic Levins result
described by the equation (\ref{NALM}), that is,
a persistence/extinction scenario characterized by an exponentially attractive solution. In this context, the main results
of this note are:

\medskip

\subsubsection{Persistence scenario:} If $\beta^{+}(\mathcal{D})<0$, or equivalently, the colonization dominates over the extinction from an upper average point of view then the fraction of the occupied patches, independently of the initial condition, is described by the function:
\begin{equation}
\label{BS}
p(t)=\left(\int_{0}^{t}\Phi(t,s)c(s)\,ds+o(1)\right)^{-1},
\end{equation}
where $\textit{o}(1)$ describes a function exponentially convergent to zero when $t\to +\infty$. 

Moreover, the expected value $E_{t}(p)$
in the sense of Puccia and Levins is given by:
\begin{equation}
\label{EVP}
E_{t}(p)=1-\frac{E_{t}(e)}{E_{t}(c)}+\frac{\text{Cov}(c,p)}{E_{t}(c)},
\end{equation}
and we can see that the above expression is reminiscent to (\ref{NALM}) when the colonization forces
dominates over the extinction ones, that is when $E_{t}(c)>E_{t}(e)$. The symbol $\text{Cov}(c,p)$ denotes the covariance between the colonization rate and the fraction of occupied patches, which is described by
\begin{displaymath}
\text{Cov}(c,p)=E_{t}(cp)-E_{t}(c)E_{t}(p).    
\end{displaymath}

\subsubsection{Extinction scenario:} If $\beta^{-}(\mathcal{D})>0$, that is, if the colonization is dominated by the extinction
from a lower average point of view, then the fraction of occupied patches converges exponentially towards zero when $t\to +\infty$.

 \begin{figure}
                    \centering
                    \includegraphics[scale=0.45]{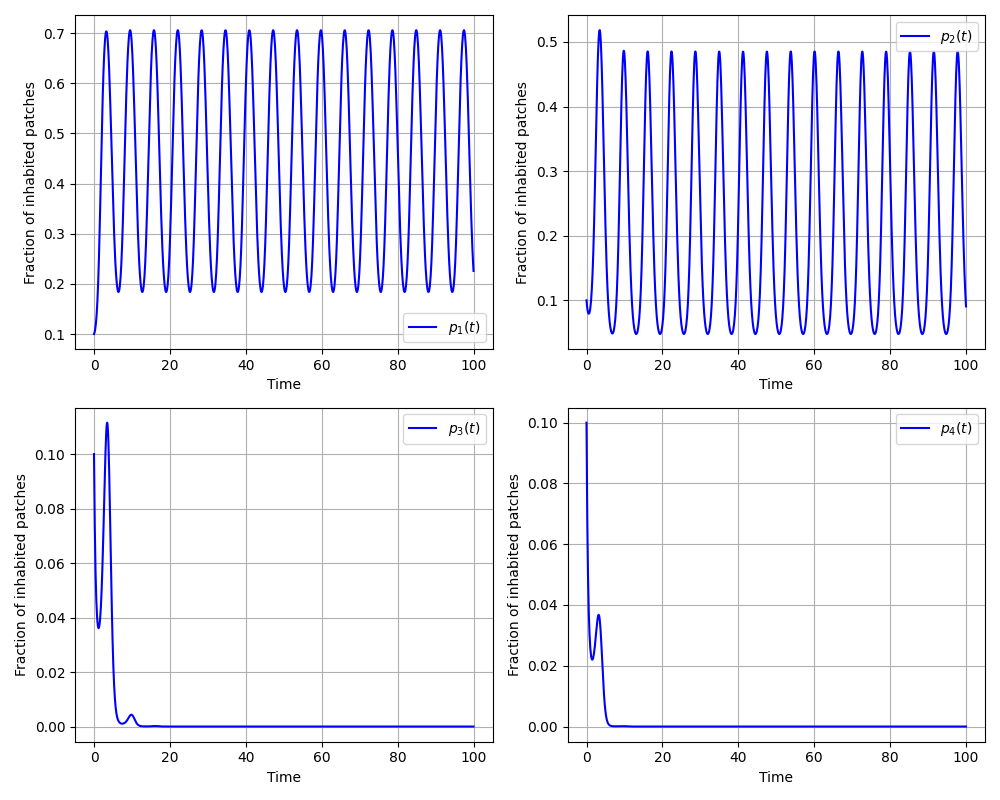}
                    \caption{Fraction of occupied patches by considering the colonization and extinction rates: 
                    $c(t)=\alpha+\sin(t) \quad \textnormal{and} \quad e(t)=\beta+\cos(t)$.
                    The images of the first row are obtained by considering the parameters $(\alpha,\beta)=(3.5,2)$ and $(\alpha,\beta)=(2.5,2)$, which leads respectively to the averages
                    $\beta^{+}(\mathcal{D})=\beta^{-}(\mathcal{D})=-1.5$ and $\beta^{+}(\mathcal{D})=\beta^{-}(\mathcal{D})=-0.5$, the persistence of the metapopulation is observed. The images of the second row are obtained by considering the parameters $(\alpha,\beta)=(1.5,2)$ and $(\alpha,\beta)=(1.1,2)$, leading to $\beta^{+}(\mathcal{D})=\beta^{-}(\mathcal{D})=0.5$ and $\beta^{+}(\mathcal{D})=\beta^{-}(\mathcal{D})=0.9$, the extinction of the metapopulation is observed.}
                    \label{fig1}
                \end{figure}

\subsection{Partial outcomes and open problems}
The scenarios complementary to those above mentioned are:
\begin{equation}
\label{abs-dico}
\beta^{-}(\mathcal{D})\leq 0 < \beta^{+}(\mathcal{D}) \quad \textnormal{and/or} \quad \beta^{-}(\mathcal{D})<0\leq \beta^{+}(\mathcal{D}),
\end{equation}
or equivalently 
$0\in [\beta^{-}(\mathcal{D}),\beta^{+}(\mathcal{D})]$
and, there are no strict dominance neither of colonization forces nor
extinction ones, that is, the variations of the sign of the function $\mathcal{D}(t)=e(t)-c(t)$ describing
the difference between extinction and colonization forces are such that the following simultaneous properties arise:
\begin{enumerate}
 \item[$\bullet$] From an upper average perspective: the colonization forces dominate 
 over the extinction ones,
 \item[$\bullet$] From a lower average perspective: the extinction forces 
 dominate over the colonization ones.
\end{enumerate}

Under these conditions, in the following sections we will see how we only can prove that the fraction of occupied patches 
satisfies $\liminf\limits_{t\to +\infty}p(t)=0$ whereas the properties of the upper limit have not yet been studied in depth.

\section{Discussion}

\subsection{Similarities with the classic case} 
The results described in subsection 3.1 generalize those from the Levins model since are reduced to classical ones stated by Eq.(\ref{NALM}) and Remark \ref{outcome} when the extinction and colonization rates are positive constants constants $e$ and $c$ verifying $d\neq c$. In adddition, from a qualitative perspective.  since there still exists a persistence/extinction scenario, which is now determined by 
the upper and lower averages of the function $t\mapsto \mathcal{D}(t):=e(t)-c(t)$ as follows:

\begin{enumerate}
\item[$\bullet$] When $\beta^{+}(e-c)<0$ there is a strict dominance
of the colonization forces over the extinction ones. In this case,
the fraction of occupied patches $t\mapsto p(t)$ is persistent
and is described by the asymptotic formula (\ref{BS}) where $o(1)$ converges exponentially to zero. This property becomes the statement i) from Remark \ref{outcome} in the constant case.
\item[$\bullet$] when $\beta^{-}(e-c)>0$ there is a strict dominance of the extinction forces over the colonization ones. In this case, the fraction of occupied patches is driven to zero at exponential rate. This property coincides with the statement ii) from Remark \ref{outcome} in the constant case.
\end{enumerate}

\subsection{Differences with the classic case} 
The results described in subsection 3.2 only partially recover those from the classical Levins model
stated by Eq.(\ref{NALM}) and Remark \ref{outcome} when the colonization forces are identical
to the extinction ones.

The methods and ideas employed in the previous cases only allow us to prove that $\liminf\limits_{t\to +\infty}p(t)=0$
but we cannot prove a similar result for the upper limit. 

The current state of art in this case is reduced to the following partial results: 

\begin{enumerate}
\item[$\bullet$] In \cite[Lemma 6.3]{Amster} we proved that, if $t\mapsto c(t)$ and $t\mapsto e(t)$ are
$T$--periodic functions, it follows that  $\limsup\limits_{t\to +\infty}p(t)=0$. 
\item[$\bullet$] In the Appendix of this work we construct a function
$t\mapsto d(t)$ such that the scenario $\limsup\limits_{t\to +\infty}p(t)>0$ is theoretically possible and 
would imply a weak persistence of the metapopulation in the sense
stated by H. Thieme in \cite[pp.175--176]{Thieme}. 
\end{enumerate}

Despite the above partial results, to determine
sufficient conditions ensuring either extinction or weak persistence
of the metapopulation remains a challenging problem.

\subsection{Mathematical remarks}
The linear time varying equation (\ref{lin-v})
played an essential role in the proof of our results. As we stated before, its non trivial solutions  describe the opposing forces of colonization and extinction. In this context, the strict dominance of colonization over extinction $\beta^{+}(e-c)<0$ and vice versa $\beta^{-}(e-c)>0$, is equivalent to the property of exponential dichotomy of (\ref{lin-v})
whose consequences will allow us to carry out a careful mathematical
analysis of the solutions of (\ref{Levins1}).

We believe that the methods and ideas employed in this article could also be used in more general metapopulations models as those described in the seccion 2.

\subsection{Extension of Levins model}
There is a myriad of possible ways to ge\-ne\-ralize the Levins model and we refer to \cite[Table 1]{Marquet}, \cite[Ch.4]{Hanski99}\cite[p.82]{OH} for details. In this subsection, we will be interested in approaches based on scalar differential equations described as follows:
\begin{equation}
\label{gene1}
p'=c(t)\mathcal{C}(p)-e(t)\mathcal{E}(p)
\end{equation}
where $\mathcal{C}(p)$ and $\mathcal{E}(p)$ respectively describes more general forms of gains from colonization and losses due to extinction. Notice that the equation (\ref{Levins1})
is encompassed by (\ref{gene1}) by considering the particular cases $\mathcal{C}_{L}(p)=p(1-p)$
and $\mathcal{E}_{L}(p)=p$. We will distinguish two specific behaviors: the \textit{rescue effect} and the \textit{propagule rain}.

\medskip

The rescue effect has been introduced by Brown $\&$ Kodric--Brown \cite{Brown} which suggested that colonization and extinction are not independent and, in this context, \textit{an island or patch that receives a high rate of colonization will also have a hight rate of immigration when it is already occupied $\cdots$ they called this the \textbf{rescue effect} since a population is rescued from extinction by continual immigration} \cite[p.401]{Case}.

The rescue effect has been revisited in 1982 by I. Hanski in \cite{Hanski82} by considering
a quadratic expression for the losses due to extinction:
\begin{displaymath}
\mathcal{E}(p)=p(1-p),    
\end{displaymath}
that is, the extinction losses are increasing only for small values of $p$ whereas are decreasing for values of $p$ close to $1$. In particular, the rescue effect is displayed when fractions of patches closed to $p=1$ leads to $\mathcal{E}(p)$ closed to $0$.

The \textit{propagule rain} has been studied by Gotelli in \cite{Gotelli} and assumes that
the colonization rate depend on the fraction of unoccupied patches:
\begin{displaymath}
\mathcal{C}(p)=1-p.    
\end{displaymath}

In \cite[Fig.2]{Gotelli}, Gotelli proposes a dichotomic  classification
$\mathcal{E}(p)$ v/s and $\mathcal{E}_{L}(p)$ together with $\mathcal{C}(p)$ v/s and $\mathcal{C}_{L}(p)$. This table has been revisited by the first author in \cite[Table 6]{Bustamante}: 
\begin{table}[h]
   \centering
    \begin{tabular}{|c|c|c|}
    \hline
  & \textnormal{\small{$\mathcal{E}(p)=p(1-p)$}} & \textnormal{\small{$\mathcal{E}_{L}(p)=p$}} \\      
\hline

\textnormal{\small{$\mathcal{C}(p)=(1-p)$}}   &  \textnormal{Case I)}   &  \textnormal{Case III)}    \\
\hline 
\textnormal{\small{$\mathcal{C}_{L}(p)=p(1-p)$}} & \textnormal{Case II)}  &  \textnormal{Levins Model}    \\ 

\hline
    \end{tabular}
    \caption{Extensions of the classical Levins model. The first co\-lumn considers the rescue effect whereas the first row considers the propagule rain.}
    \label{tab_0}
\end{table}

The cases I,II and III from the above table can studied similarly as we described in section 3, we can summarize 
our results as follows:

\medskip

\subsubsection{Study of case I:} The model with propagule rain
and rescue effect has been studied by Gotelli in \cite{Gotelli}. Its generalized version is given by:
\begin{equation}
\label{PR}
p'=c(t)(1-p)-e(t)p(1-p).
\end{equation}

\noindent $\bullet$ If $\beta^{+}(e-c)<0$ it can be proved that the fraction
of occupied patches converges exponentially towards $1$

\noindent $\bullet$ If $\beta^{-}(e-c)>0$ it can be proved that any solution
$t\mapsto p(t)$ verifies
\begin{displaymath}
p(t)=1-\left(\int_{0}^{t}\Phi^{-1}(t,s)e(s)\,ds+o(1)\right)^{-1}.    
\end{displaymath}

We point out that if $c(t)=c$ and $e(t)=e$, the above expression coincides
with the attractiveness of the equilibrium $c/e$ when $c<e$. Notice that, in this case, when the colonization forces are dominant from an upper average pont of view, all the patches will be occupied. In addition, there are no extinction of the metapopulation when the extinction forces are dominatinf from a lower average point of view. 

\subsubsection{Study of case II:} The model with rescue effect and $\mathcal{C}_{L}(p)=p(1-p)$ was introduced
by Hanski in \cite{Hanski82} with constant colonization/extinction rates. When considering time varying colonization and extinction rates, the model becomes:
\begin{equation}
\label{RE}
p'=c(t)p(1-p)-e(t)p(1-p).
\end{equation}

The study of the solutions of (\ref{RE}) can be addressed by using similar methods as those employed to study 
(\ref{Levins1}). In fact, it can be proved that:
\medskip

\noindent $\bullet$ If $\beta^{+}(e-c)<0$ then any solution
$t\mapsto p(t)$ with $p(0)\in (0,1]$ verifies
$$
p(t)=\left(-\int_{0}^{t}\Phi(t,s)d(s)\,ds+o(1)\right)^{-1}=\frac{1}{1+o(1)},
$$
where the last identity follows from $\frac{d}{ds}\Phi(t,s)=-\Phi(t,s)d(s)$
and the asymptotic term is $o(1)=\Phi(t,s)[p^{-1}(0)-1]$.

\noindent $\bullet$ If $\beta^{-}(e-c)>0$ then any solution of 
(\ref{RE}) converges exponentially towards zero.

Similarly as in the previous case, we can see that in presence of rescue effect, the persistence of the metapopulations is achieved with the totality of the patches occupied.

\subsubsection{Study of case III:} The model with propagule rain $\mathcal{C}(p)=1-p$ and without rescue effect 
can also be seen as a single species version of the mainland--island model of Mac-Arthur and Wilson \cite{MAW}. When considering time varying colonization and extinction rates it becomes:
\begin{equation}
\label{MI}
p'=c(t)(1-p)-e(t)p,
\end{equation}
and it is straightforward to deduce that any solution $t\mapsto p(t)$ of (\ref{MI}) 
verifies: 
$$
p(t)=\int_{0}^{t}\Psi(t,s)c(s)\,ds +o(1)  \quad \textnormal{with $\Psi(t,s)=\exp\left(-\int_{s}^{t}[c(r)+e(r)]\,dr\right)$}.
$$

\medskip

Finally, we also stress 
the contribution of Hanski and Gyllenberg \cite{HG}, where the \textit{phenomenological model}:
\begin{displaymath}
p'=\{p/[b+(1-b)p]\}\{c(t)(1-p)-e(t)[b^{2}+(1-b^{2})p]\}    \quad \textnormal{with $b\in [0,1]$}
\end{displaymath}
is introduced. Note that the limit case $b=1$ coincides with the Levins model (\ref{levins-0}) 
while the limit case $b=0$ coincides with the mainland--island model (\ref{MI}). Nevertheless, we point out that our methods and ideas cannot be applied to this model and its study remains an open problems.

\appendix

\section{Persistence/extinction average conditions for metapopulations: Mathematical proofs}

In order to study the behavior of the solutions $t\mapsto v(t)$  of (\ref{lin-v}) with
the Bohl exponents $\beta^{+}(\mathcal{D})$ and $\beta^{-}(\mathcal{D})$, we need to consider the shifted equation associated to (\ref{lin-v}): 
\begin{equation}
\label{shifted}
x'=[\mathcal{D}(t)-\lambda]x,
\end{equation}
where $\lambda \in \mathbb{R}$. Notice that its transition operator is
$\Phi(t,s)e^{-\lambda(t-s)}$. The following definition will be fundamental:

\begin{definition}
\label{dichotomie}
The shifted system \eqref{shifted} has an exponential dichotomy on $[0,+\infty)$ if there exist constants
$K,L\geq 1$ and $\alpha,\eta>0$ such that  either
\begin{equation}
\label{ED+}
\Phi(\tau,s)e^{-\lambda(\tau-s)} \leq Ke^{-\alpha(\tau-s)} \quad \textnormal{for any $\tau \geq s$ with $\tau,s\geq 0$},   
\end{equation}
or
\begin{equation}
\label{ED-}
\Phi(\tau,s)e^{-\lambda(\tau-s)} \leq Le^{-\eta(s-\tau)} \quad \textnormal{for any $s\geq \tau$ with $\tau,s\geq 0$}.   
\end{equation}
\end{definition}

Last but not least, we need to introduce the following concept: 
\begin{definition}
\label{espectro}
The exponential dichotomy spectrum of the equation \eqref{lin-v} is defined by the following set:
\begin{displaymath}
\Sigma(\mathcal{D})=\left\{\lambda \in \mathbb{R}\colon x'=[\mathcal{D}(t)-\lambda]x \quad \textnormal{has not an exponential dichotomy on $[0,+\infty)$}\right\}.
\end{displaymath}
\end{definition}

The above spectrum has a simpler characterization in this scalar case:
\begin{proposition} 
\label{SPEC}
The exponential dichotomy spectrum of \eqref{lin-v} is a closed interval
\begin{displaymath}
\Sigma(\mathcal{D})=[\beta^{-}(\mathcal{D}),\beta^{+}(\mathcal{D})],
\end{displaymath}
and its complement is composed by the spectral gaps $(-\infty,\beta^{-}(\mathcal{D}))$
and $(\beta^{+}(\mathcal{D}),+\infty)$.
\begin{itemize}
\item The case \eqref{ED+} holds for $\mathcal{D}(\cdot)-\lambda$ if and only if $\lambda \in (\beta^{+}(\mathcal{D}),+\infty)$,
\item The case \eqref{ED-} holds for $\mathcal{D}(\cdot)-\lambda$ if and only if $\lambda \in (-\infty,\beta^{-}(\mathcal{D}))$.
 \end{itemize}
\end{proposition}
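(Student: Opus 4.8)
The plan is to exploit the fact that in this scalar setting the shifted transition operator is explicit, namely $\Phi(\tau,s)e^{-\lambda(\tau-s)}=\exp\left(\int_s^\tau[\mathcal{D}(r)-\lambda]\,dr\right)$, so that after taking logarithms and dividing by $\tau-s$ the two dichotomy estimates \eqref{ED+} and \eqref{ED-} reduce to inequalities on the averaged integral $\frac{1}{\tau-s}\int_s^\tau\mathcal{D}(r)\,dr$, which is precisely the quantity whose asymptotics define $\beta^{+}(\mathcal{D})$ and $\beta^{-}(\mathcal{D})$ in \eqref{Bohl-2} and \eqref{Bohl}. I would prove the two displayed bullet equivalences separately; the interval description then follows at once, since by Definition \ref{dichotomie} we have $\lambda\notin\Sigma(\mathcal{D})$ precisely when \eqref{ED+} or \eqref{ED-} holds, so that the complement of the spectrum is $(\beta^{+}(\mathcal{D}),+\infty)\cup(-\infty,\beta^{-}(\mathcal{D}))$.

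For the first bullet I would show that \eqref{ED+} holds if and only if $\lambda>\beta^{+}(\mathcal{D})$. Taking logarithms, \eqref{ED+} is equivalent to $\frac{1}{\tau-s}\int_s^\tau\mathcal{D}(r)\,dr\le\frac{\ln K}{\tau-s}+\lambda-\alpha$ for all $\tau\ge s\ge 0$. Necessity is immediate: letting $\tau-s\to+\infty$ and using \eqref{Bohl-2} gives $\beta^{+}(\mathcal{D})\le\lambda-\alpha<\lambda$. For sufficiency, given $\lambda>\beta^{+}(\mathcal{D})$ I would fix $\alpha\in(0,\lambda-\beta^{+}(\mathcal{D}))$; by the definition of the upper limit as $\tau-s\to+\infty$ there is $L>0$ such that the averaged integral is $\le\lambda-\alpha$ whenever $\tau-s\ge L$, which yields \eqref{ED+} with $K=1$ on that range. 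On the complementary range $0\le\tau-s<L$ the boundedness of $\mathcal{D}$ (inherited from the boundedness of $e$ and $c$) bounds $\exp\left(\int_s^\tau[\mathcal{D}(r)-\lambda]\,dr\right)$ by a constant, which is absorbed into a single $K\ge1$ valid for every $\tau\ge s$.

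The second bullet, \eqref{ED-} holds if and only if $\lambda<\beta^{-}(\mathcal{D})$, is entirely analogous but run backward in time: for $s\ge\tau$ one writes $\Phi(\tau,s)e^{-\lambda(\tau-s)}=\exp\left(\int_\tau^s[\lambda-\mathcal{D}(r)]\,dr\right)$, so \eqref{ED-} becomes $\frac{1}{s-\tau}\int_\tau^s\mathcal{D}(r)\,dr\ge\frac{-\ln L}{s-\tau}+\lambda+\eta$. Passing to the lower limit as $s-\tau\to+\infty$ and invoking \eqref{Bohl} gives necessity $\beta^{-}(\mathcal{D})\ge\lambda+\eta>\lambda$, while for sufficiency I would fix $\eta\in(0,\beta^{-}(\mathcal{D})-\lambda)$ and use the definition of the lower limit to secure the estimate for $s-\tau$ large, again absorbing the short intervals into the constant $L$ by boundedness of $\mathcal{D}$.

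Finally I would assemble the interval statement. Combining the two equivalences, the spectral gaps are exactly $(\beta^{+}(\mathcal{D}),+\infty)$ and $(-\infty,\beta^{-}(\mathcal{D}))$, whence $\Sigma(\mathcal{D})=[\beta^{-}(\mathcal{D}),\beta^{+}(\mathcal{D})]$; in particular both endpoints belong to the spectrum, since neither of the strict inequalities $\lambda>\beta^{+}(\mathcal{D})$, $\lambda<\beta^{-}(\mathcal{D})$ can hold there. I expect the main difficulty to be bookkeeping rather than conceptual: one must use the precise meaning of the Bohl limits \eqref{Bohl} and \eqref{Bohl-2} as limits over $t-s\to+\infty$ taken uniformly in the initial time $s\ge0$, which is exactly what delivers the uniform constants $K,L$ and rates $\alpha,\eta$ demanded by Definition \ref{dichotomie}; and one must carefully pass from an averaged bound valid only for large $\tau-s$ to a genuine exponential bound valid for all admissible pairs, where the standing boundedness of $e(\cdot)$ and $c(\cdot)$ is essential to control the short-time intervals.
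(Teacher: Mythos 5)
Your proposal is correct. Note, however, that the paper itself does not prove Proposition \ref{SPEC}: it simply defers to the reference \cite{Amster} (``We refer the reader to \cite{Amster} for a detailed proof''), so there is no in-paper argument to compare against, and your write-up supplies a self-contained proof where the paper gives none. Your route is the natural one for the scalar case: since the shifted transition operator is the explicit exponential $\exp\left(\int_s^\tau[\mathcal{D}(r)-\lambda]\,dr\right)$, each one-sided estimate \eqref{ED+}, \eqref{ED-} becomes, after taking logarithms and dividing by $\tau-s$, a uniform bound on the averaged integral $\frac{1}{\tau-s}\int_s^\tau\mathcal{D}(r)\,dr$, which is exactly the quantity governed by the Bohl exponents \eqref{Bohl}, \eqref{Bohl-2}. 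Both directions are handled soundly: necessity by passing to the limit $\tau-s\to+\infty$ in the logarithmic inequality (yielding $\beta^{+}(\mathcal{D})\le\lambda-\alpha<\lambda$, respectively $\beta^{-}(\mathcal{D})\ge\lambda+\eta>\lambda$), and sufficiency by splitting into the range where $\tau-s$ exceeds a threshold (where the uniformity in $s$ built into the Bohl limits gives the exponential estimate with constant $1$) and the complementary bounded range (absorbed into the constants $K$, $L$ using boundedness of $\mathcal{D}=e-c$); this splitting is precisely the step that converts an asymptotic average bound into a genuine dichotomy estimate valid for all admissible pairs, and you identify it correctly as the crux. Only cosmetic issues remain: you reuse the letter $L$ both as the length threshold in the first bullet and as the dichotomy constant of \eqref{ED-}, and it would be worth stating explicitly that $\beta^{\pm}(\mathcal{D})$ are finite (again by boundedness of $\mathcal{D}$), so that $[\beta^{-}(\mathcal{D}),\beta^{+}(\mathcal{D})]$ is a well-defined compact interval; neither point affects correctness.
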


We refer the reader to \cite{Amster} for a detailed proof. By considering $\lambda=0$, a direct consequence
follows:
\begin{remark}
\label{consecuencia}
The linear system has an exponential dichotomy on $[0,+\infty)$ if and only if $0\notin [\beta^{-}(\mathcal{D}),\beta^{+}(\mathcal{D})]$.
\end{remark}

Moreover, the following result shows that the Bohl exponents will allow to characterize the behavior of the solutions (\ref{lin-v}) for bigger values of $t$:
\begin{proposition}
\label{PROP}
Let $t\mapsto x(t)$ be a solution of \eqref{lin-v} passing trough $x_{0}\neq 0$ at time
$t_{0}\geq 0$, then we have:

\medskip

\noindent \textnormal{a)} If $\beta^{+}(\mathcal{D})=\beta^{+}(e-c)<0$, then $\lim\limits_{t\to +\infty}x(t)=0$ and the convergence is 
uniformly exponential, that is, there exists constants $K\geq 1$ and $\alpha>0$ such that:
$$
|x(t)| \leq K|x_{0}|e^{-\alpha(t-t_{0})} \quad \textnormal{\textit{for all} $t\geq t_{0}\geq 0$},
$$

\noindent \textnormal{b)} If $\beta^{-}(\mathcal{D})=\beta^{-}(e-c)>0$, then $\lim\limits_{t\to +\infty}|x(t)|=+\infty$ and the divergence is uniformly exponential, that is, there exists constants $L\geq 1$ and $\eta >0$ such that:
$$
|x_{0}| \leq L|x(t)|e^{-\eta(t-t_{0})} \quad \textnormal{or} \quad |x_{0}|e^{\eta(t-t_{0})}\leq |x(t)| \quad \textnormal{for all $0\leq t_{0}\leq t$}.
$$
\end{proposition}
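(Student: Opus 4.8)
The plan is to reduce everything to the case $\lambda = 0$ in the exponential dichotomy framework of Proposition \ref{SPEC} and then translate the resulting transition–operator estimates into pointwise bounds on the solution $x(t) = \Phi(t,t_0)x_0$. The key preliminary observation is that, because equation \eqref{lin-v} is scalar, its transition operator $\Phi(t,s) = \exp\left(\int_s^t \mathcal{D}(\tau)\,d\tau\right)$ is a strictly positive scalar, so $|x(t)| = \Phi(t,t_0)\,|x_0|$ and the sign or positivity complications that arise in higher dimensions simply do not occur here. In particular $x_0 \neq 0$ guarantees $x(t) \neq 0$ for every $t$.

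For part a) I would argue as follows. The hypothesis $\beta^+(\mathcal{D}) < 0$ means precisely that $0 \in (\beta^+(\mathcal{D}), +\infty)$, so by the first characterization of Proposition \ref{SPEC} the shifted equation with $\lambda = 0$ --- which is \eqref{lin-v} itself --- satisfies estimate \eqref{ED+}. Setting $\lambda = 0$ there produces constants $K \geq 1$ and $\alpha > 0$ with $\Phi(\tau,s) \leq K e^{-\alpha(\tau - s)}$ for all $\tau \geq s \geq 0$. Applying this with $\tau = t$, $s = t_0$ and multiplying by $|x_0|$ yields $|x(t)| = \Phi(t,t_0)\,|x_0| \leq K|x_0|e^{-\alpha(t - t_0)}$, which is the asserted uniform exponential bound; letting $t \to +\infty$ gives $x(t) \to 0$.

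For part b) the argument is dual. The hypothesis $\beta^-(\mathcal{D}) > 0$ gives $0 \in (-\infty, \beta^-(\mathcal{D}))$, so by the second characterization of Proposition \ref{SPEC} equation \eqref{lin-v} satisfies \eqref{ED-} with $\lambda = 0$, providing constants $L \geq 1$ and $\eta > 0$ with $\Phi(\tau,s) \leq L e^{-\eta(s - \tau)}$ for all $s \geq \tau \geq 0$. Here the only point requiring care is the direction of the estimate: I would use the cocycle identity $x_0 = \Phi(t_0,t)\,x(t)$ and apply \eqref{ED-} with the assignment $\tau = t_0$, $s = t$ (legitimate because $t \geq t_0$), obtaining $\Phi(t_0,t) \leq L e^{-\eta(t - t_0)}$ and hence $|x_0| \leq L|x(t)|e^{-\eta(t - t_0)}$. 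Rearranging gives $|x(t)| \geq L^{-1}|x_0|e^{\eta(t - t_0)}$, from which $\lim_{t\to +\infty}|x(t)| = +\infty$ follows at once since $|x_0| > 0$.

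The main obstacle is not conceptual depth but the bookkeeping of part b): one must correctly invert the transition operator and match the constraint $s \geq \tau$ built into \eqref{ED-}, which forces the substitution $\tau = t_0$, $s = t$ rather than the naive ordering. Once $\lambda = 0$ is inserted and this direction is handled, the statement is an essentially immediate corollary of Proposition \ref{SPEC}, with the uniformity of the constants $K, \alpha$ (resp. $L, \eta$) in the dichotomy estimate being exactly what delivers the claimed \emph{uniform} exponential convergence (resp. divergence) over all admissible $t_0$.
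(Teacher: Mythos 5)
Your proposal is correct and follows essentially the same route as the paper's own proof: both invoke Proposition \ref{SPEC} at $\lambda=0$ to obtain the dichotomy estimates, then apply \eqref{ED+} with $(\tau,s)=(t,t_0)$ for part a) and \eqref{ED-} with $(\tau,s)=(t_0,t)$ together with $x_0=\Phi(t_0,t)x(t)$ for part b). The only cosmetic difference is that you phrase the hypothesis as membership in a spectral gap while the paper phrases it as an inclusion of $\Sigma(\mathcal{D})$ in a half-line; these are equivalent.
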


\begin{proof}
In order to prove the statement a), note that if $\beta^{+}(d)<0$, the characterization of the exponential dichotomy spectrum implies that $\Sigma(d)\subset (-\infty,0)$. Then, we can consider (\ref{ED+}) with $\lambda=0$ and $(\tau,s)=(t,t_{0})$ to deduce that:
$$
\Phi(t,t_{0})|x(t_{0})| \leq K|x(t_{0})|e^{-\alpha(t-t_{0})} \quad \textnormal{for any $t\geq t_{0}\geq 0$}.   
$$

As $x(t)=\Phi(t,t_{0})x(t_{0})$, the above inequality is equivalent to
$$
|x(t)|\leq  K|x(t_{0})|e^{-\alpha(t-t_{0})} \quad \textnormal{for any $t\geq t_{0}\geq 0$},
$$
and the statement a) follows.

To prove the statement b), let us observe that if $\beta^{-}(d)>0$, the characterization of the exponential dichotomy
spectrum implies that $\Sigma(d)\subset (0,+\infty)$. In consequence, we can consider (\ref{ED-}) with $\lambda=0$  and $(\tau,s)=(t_{0},t)$, which allow us to deduce that
$$
\Phi(t_{0},t)|x(t)| \leq  Le^{-\eta(t-t_{0})}|x(t)| \quad \textnormal{for any $t\geq t_{0}\geq 0$}.
$$

As $x(t_{0})=\Phi(t_{0},t)x(t)$, the above inequality implies that
$$
|x(t_{0})|\leq  L|x(t)|e^{-\eta(t-t_{0})} \quad \textnormal{for any $t\geq t_{0}\geq 0$},
$$
and the statement b) follows.

\end{proof}

The above result is mentioned in the literature but we include it in order to make this note
the most self contained as possible.

\begin{theorem}
\label{T1}
Let us consider the generalized Levins model described by \eqref{Levins1} where 
$t\mapsto c(t)$ and $t\mapsto e(t)$ are bounded, continuous and positive maps for any $t\in \mathbb{R}$. If $p(0)>0$ is the fraction of the occupied patches at time zero, then the fraction $t\mapsto p(t)$ of occupied patches at time $t\geq 0$ is described by:
\begin{equation}
\label{limits}
p(t)= \displaystyle\frac{1}{\displaystyle \Phi(t,0)p^{-1}(0)+\int_{0}^{t}\Phi(t,s)c(s)\,ds}.
\end{equation}
Moreover its evolution in time will be determined by the relation between the opposing forces of colonization and extinction as follows:
as follows:
\begin{itemize}
\item[i)] If $\beta^{+}(\mathcal{D})=\beta^{+}(e-c)<0$, the fraction $t\mapsto p(t)$ of occupied patches verifies
\begin{equation}
\label{AB}
p(t)=\left(\int_{0}^{t}\Phi(t,s)c(s)\,ds+o(1)\right)^{-1},
\end{equation}
where $\textit{o(1)}$ is a function exponentially convergent to zero.
\item[ii)] If $\beta^{-}(\mathcal{D})=\beta^{-}(e-c)>0$, there exists $L\geq 1$ and $\eta>0$ such that the fraction of occupied patches $t\mapsto p(t)$ is driven exponentially to zero at rate
\begin{equation}
\label{limit2}
p(t)\leq Le^{-\eta t}  \quad \textnormal{for any $t\geq 0$}.
\end{equation}
\end{itemize}
\end{theorem}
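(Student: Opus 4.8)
The plan is to linearize the Bernoulli equation \eqref{Levins1} and then read off the asymptotics directly from Proposition \ref{PROP}. First I would establish that the interval $(0,1]$ is positively invariant, so that the solution exists for all $t\geq 0$ and remains strictly positive: at $p=0$ the right-hand side of \eqref{Levins1} vanishes, hence by uniqueness a solution with $p(0)>0$ never reaches zero, while at $p=1$ one has $p'=-e(t)<0$, so a solution cannot escape above $1$. This strict positivity is exactly what legitimizes the substitution used next.

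Next I would set $u=1/p$, which is well defined by the previous step. Rewriting \eqref{Levins1} as $p'=-\mathcal{D}(t)p-c(t)p^{2}$ and using $u'=-p'/p^{2}$ yields the \emph{linear} equation
\begin{displaymath}
u'=\mathcal{D}(t)u+c(t).
\end{displaymath}
Its homogeneous part is precisely \eqref{lin-v}, whose transition operator is $\Phi(t,s)$ from \eqref{MTD}. The variation of constants formula then gives
\begin{displaymath}
u(t)=\Phi(t,0)\,p^{-1}(0)+\int_{0}^{t}\Phi(t,s)c(s)\,ds,
\end{displaymath}
and inverting $p=1/u$ produces the closed form \eqref{limits}. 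At this point the whole theorem is reduced to understanding the two summands in the denominator of \eqref{limits}.

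For statement i), since $\beta^{+}(\mathcal{D})<0$, Proposition \ref{PROP}(a) applied to the solution $t\mapsto \Phi(t,0)$ of \eqref{lin-v} (taking $x_{0}=1$ and $t_{0}=0$) gives $\Phi(t,0)\leq Ke^{-\alpha t}$; hence the first summand $\Phi(t,0)p^{-1}(0)$ converges exponentially to zero, and relabelling it as $o(1)$ produces \eqref{AB}. For statement ii), since $\beta^{-}(\mathcal{D})>0$, Proposition \ref{PROP}(b) gives $\Phi(t,0)\geq L^{-1}e^{\eta t}$. Because $\Phi(t,s)>0$ and $c(s)>0$, the integral in the denominator of \eqref{limits} is nonnegative, so the denominator is bounded below by $\Phi(t,0)p^{-1}(0)$; discarding the integral therefore yields
\begin{displaymath}
p(t)\leq \frac{1}{\Phi(t,0)p^{-1}(0)}\leq L\,p(0)\,e^{-\eta t},
\end{displaymath}
which is \eqref{limit2} after enlarging the constant so that it is at least $1$.

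I expect no deep obstacle: the computation is driven entirely by the Bernoulli structure, and the asymptotics are immediate consequences of Proposition \ref{PROP}. The only points demanding genuine care are the invariance and global-existence argument that legitimizes $u=1/p$ for all $t\geq 0$, and, in the two dominance cases, the verification that the \emph{discarded} summand is the harmless one — the exponentially vanishing term in i) and the nonnegative integral in ii). Everything else is routine variation of parameters.
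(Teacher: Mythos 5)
Your proposal is correct and follows essentially the same route as the paper's own proof: the Bernoulli substitution $u=1/p$, variation of parameters to obtain \eqref{limits}, and Proposition \ref{PROP} parts a) and b) to handle the two dominance cases exactly as done there. The only difference is that you explicitly verify positive invariance of $(0,1]$ before substituting $u=1/p$ — a step the paper leaves implicit — which is a small but genuine improvement in rigor rather than a different approach.
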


\begin{proof}
Notice that the differential equation (\ref{Levins1}) is of Bernoulli's type, then the change of variables $u=1/p$ combined
with  $\mathcal{D}(t)=e(t)-c(t)$ transforms (\ref{Levins1}) in the inhomogenous equation:
\begin{equation}
\label{Levins2}
u'=\mathcal{D}(t)u+c(t).
\end{equation}

Notice that the linear equation associated to (\ref{Levins2}) is given by (\ref{lin-v}). By variation of parameters, it is direct to verify that
\begin{equation}
\label{solucion}
u(t)=\Phi(t,0)u(0)+\int_{0}^{t}\Phi(t,s)c(s)\,ds
\end{equation}
and the identity (\ref{limits}) follows due to $u(t)=1/p(t)$.

Firstly, we assume that $\beta^{+}(\mathcal{D})<0$ and let $t\mapsto u(t)$ be any solution of (\ref{Levins2}).
Now, by using the statement a) from Proposition \ref{PROP}, we have the existence of constants $K\geq 1$ and $\alpha>0$ such that any non trivial solution $t\mapsto v(t)=\Phi(t,0)p^{-1}(0)$ of (\ref{lin-v}) verifies:
$$
|v(t)|=\Phi(t,0)|v(0)|\leq Ke^{-\alpha t}|v(0)| \,\,\, \textnormal{or equivalently $\Phi(t,0)\leq Ke^{-\alpha t}$} \,\,\,\textnormal{for any $t\geq 0$},
$$
then we conclude easily that $
\Phi(t,0)p^{-1}(0)=\textit{o}(1)$ and the convergence is exponential when $t\to +\infty$, which leads to
\begin{displaymath}
u(t)=\int_{0}^{t}\Phi(t,s)c(s)\,ds + \textit{o}(1),
\end{displaymath}
and the statement i) is proved since $u(t)=1/p(t)$.

\medskip

Secondly, let us assume that $0<\beta^{-}(\mathcal{D})$. By using the statement b) from Proposition \ref{PROP}
we have that $v(t)=\Phi(t,0)v(0)$ and there exists constants $L\geq 1$ and $\eta>0$ such that
$$
|v(0)|\leq Lv(t)e^{-\eta t}=L\Phi(t,0)|v(0)|e^{-\eta t}  \quad \textnormal{for any $t\geq 0$},
$$
or equivalently 
$$
\Phi(t,0)\geq L^{-1}e^{\eta t}  \quad \textnormal{for any $t\geq 0$}.
$$

Now, by using the above estimation combined with the identity (\ref{limits}) 
and the positiveness of $c(t)$ and $\Phi(t,s)$,
it can be deduced that
\begin{displaymath}
\begin{array}{rcl}
p(t) &=& \displaystyle\frac{1}{\displaystyle \Phi(t,0)p^{-1}(0)+\int_{0}^{t}\Phi(t,s)c(s)\,ds} \\\\
&\leq & \displaystyle\frac{1}{\displaystyle L^{-1}p^{-1}(0)e^{\eta t}+\int_{0}^{t}\Phi(t,s)c(s)\,ds}\\\\
&\leq & Lp(0)e^{-\eta t},
\end{array}
\end{displaymath}
and (\ref{limit2}) is verified. As $p(t)$ is non negative, we have that the fraction of occupied patches verifies $p(t)\to 0$ when
$t\to +\infty$ and the statement ii) follows.
\end{proof}

In case that the colonization rate is constant and the extinction rate $t\mapsto e(t)$ is bounded, positive and continuous, the 
Theorem \ref{T1} can be simplified as follows:
\begin{table}[h]
    \centering
    \begin{tabular}{|c|c|}
    \hline
\textnormal{$c< \beta^{-}(e)$}  & \textnormal{$\beta^{+}(e)< c$} \\      
\hline
\textnormal{$p(t)\to 0$ (exponentially)}  & \textnormal{$p(t)=\left(c\int_{0}^{t}\Phi(t,s)\,ds+o(1)\right)^{-1}$} \\
\hline
    \end{tabular}
    \caption{Behavior of the fraction of occupied patches $p(t)$ when $t\to +\infty$ and the colonization rate is constant.}
    \label{tab_1}
\end{table}

Similarly, When the extinction rate is constant and the colonization rate $t\mapsto c(t)$ is bounded, positive and continuous, the 
Theorem \ref{T1} can be simplified as follows:
\begin{table}[h]
    \centering
    \begin{tabular}{|c|c|}
    \hline
\textnormal{$e< \beta^{-}(c)$}   & \textnormal{$\beta^{+}(c)< e$} \\      
\hline
\textnormal{$p(t)=\left(\int_{0}^{t}\Phi(t,s)c(s)\,ds+o(1)\right)^{-1}$}  & \textnormal{$p(t)\to 0$ (exponentially)} \\
\hline
    \end{tabular}
    \caption{Behavior of the fraction of occupied patches $p(t)$ when $t\to +\infty$ and the colonization rate is constant.}
    \label{tab_2}
\end{table}

\section{Other scenarios}
Notice that the case $0\in [\beta^{-}(\mathcal{D}),\beta^{+}(\mathcal{D})]$ is not covered by the previous result
and, according to Remark \ref{consecuencia}, the linear equation (\ref{lin-v}) does not have an exponential dichotomy
on $[0,+\infty)$. In this context, let us recall the following result adapted from \cite[Prop.2.2]{Amster}:
\begin{proposition} 
\label{Prop-Cop}
Let $t\mapsto \mathcal{D}(t)$ be the difference between extinction and colonization
stated in \eqref{D}. The following properties are equivalent:
\begin{itemize}
\item[a)] The linear equation \eqref{lin-v} has an exponential dichotomy on $[0,+\infty)$.
\item[b)] For any 
bounded and continuous function $\mathfrak{v}\colon [0,+\infty) \to \mathbb{R}$, the inhomogenous
equation:
\begin{equation}
\label{inhp}
x'=\mathcal{D}(t)x + \mathfrak{v}(t)    
\end{equation}
has  at least one bounded and continuous solution on $[0,+\infty)$. 
More precisely, $\Sigma(\mathcal{D})\subset (-\infty,0)$ holds if and only if 
all solutions are bounded for each bounded $\mathfrak{v}(\cdot)$   and
$\Sigma(\mathcal{D})\subset (0,+\infty)$ holds if and only if 
there exists exactly one bounded solution for  each bounded $\mathfrak{v}(\cdot)$.
\item[c)] The non homogeneous equation 
\begin{equation}
\label{3nh}
z'=\mathcal{D}(t)z+ 1
\end{equation}
has at least a bounded solution on $[0,+\infty)$. 
\item[d)] There exists a bounded and continuous function $\mathfrak{v}\colon [0,+\infty)\to \mathbb{R}$ with 
$$
\liminf\limits_{t\to +\infty}|\mathfrak{v}(t)|>0
$$
such that the problem \eqref{inhp} has at least one bounded solution. 
\end{itemize}
\end{proposition}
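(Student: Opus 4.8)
The plan is to prove the four equivalences through the cyclic chain (a) $\Rightarrow$ (b) $\Rightarrow$ (c) $\Rightarrow$ (d) $\Rightarrow$ (a), reading off the refined distinction between the stable and unstable regimes from Proposition \ref{SPEC}, with the converse implication (d) $\Rightarrow$ (a) being the genuine obstacle.

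\textbf{Step (a) $\Rightarrow$ (b).} Here I would use the explicit scalar transition operator together with the dichotomy estimates \eqref{ED+}--\eqref{ED-} specialized at $\lambda=0$. By Remark \ref{consecuencia} exactly one of the two spectral gaps contains $0$. If $\Sigma(\mathcal{D})\subset(-\infty,0)$, i.e.\ $\beta^{+}(\mathcal{D})<0$, then \eqref{ED+} gives $\Phi(t,s)\le Ke^{-\alpha(t-s)}$ for $t\ge s$, so the particular solution $x(t)=\int_{0}^{t}\Phi(t,s)\mathfrak{v}(s)\,ds$ satisfies $|x(t)|\le (K/\alpha)\sup_{s\ge0}|\mathfrak{v}(s)|$ and is bounded; since every homogeneous solution $\Phi(t,0)x_{0}$ also decays, \emph{all} solutions of \eqref{inhp} are bounded. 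If instead $\Sigma(\mathcal{D})\subset(0,+\infty)$, i.e.\ $\beta^{-}(\mathcal{D})>0$, then \eqref{ED-} gives $\Phi(t,s)\le Le^{-\eta(s-t)}$ for $s\ge t$, and the candidate $x(t)=-\int_{t}^{+\infty}\Phi(t,s)\mathfrak{v}(s)\,ds$ is well defined and bounded by $(L/\eta)\sup_{s\ge0}|\mathfrak{v}(s)|$; any two bounded solutions differ by a homogeneous solution, which by Proposition \ref{PROP}(b) is unbounded unless it vanishes, so the bounded solution is \emph{unique}. Since these two cases are mutually exclusive and exhaust $0\notin\Sigma(\mathcal{D})$, the dichotomy ``all solutions bounded'' versus ``exactly one bounded solution'' characterizes precisely $\Sigma(\mathcal{D})\subset(-\infty,0)$ and $\Sigma(\mathcal{D})\subset(0,+\infty)$, which is the refined part of (b).

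\textbf{Steps (b) $\Rightarrow$ (c) $\Rightarrow$ (d).} These are immediate: statement (c) is the specialization of (b) to the admissible forcing $\mathfrak{v}\equiv 1$, which is bounded and continuous and yields \eqref{3nh}; and the same constant function witnesses (d), because $\liminf_{t\to+\infty}|\mathfrak{v}(t)|=1>0$ while the bounded solution furnished by (c) is exactly the solution required in (d).

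\textbf{Step (d) $\Rightarrow$ (a).} I would argue by contraposition, assuming $0\in[\beta^{-}(\mathcal{D}),\beta^{+}(\mathcal{D})]$ (no exponential dichotomy, by Remark \ref{consecuencia}) and showing that \emph{no} forcing with $\liminf_{t\to+\infty}|\mathfrak{v}(t)|>0$ can admit a bounded solution. Since such a $\mathfrak{v}$ is continuous and eventually bounded away from zero, it keeps a fixed sign for $t\ge T_{0}$, say $\mathfrak{v}(t)\ge m>0$. Writing any solution through $t_{0}\ge T_{0}$ as $x(t)=\Phi(t,t_{0})\big[x(t_{0})+\int_{t_{0}}^{t}\Phi(s,t_{0})^{-1}\mathfrak{v}(s)\,ds\big]$ and using $\Phi>0$ and $\mathfrak{v}\ge m$, the bracketed term is strictly increasing, and I would split into the two cases according to whether $\int_{t_{0}}^{+\infty}\Phi(s,t_{0})^{-1}\mathfrak{v}(s)\,ds$ diverges or converges. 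In the divergent case the boundedness of $x$ forces $\Phi(t,t_{0})\to 0$ at least as fast as the reciprocal of a diverging integral, which I expect to upgrade into the uniform estimate \eqref{ED+}; in the convergent case only the single initial value $x(t_{0})=-\int_{t_{0}}^{+\infty}\Phi(s,t_{0})^{-1}\mathfrak{v}(s)\,ds$ can yield a bounded solution, whose boundedness together with $\mathfrak{v}\ge m$ forces the uniform backward estimate \eqref{ED-}. In either case one contradicts $0\in\Sigma(\mathcal{D})$.

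\textbf{Main obstacle.} The hard part is precisely this last step, namely converting the \emph{pointwise} boundedness of one solution of \eqref{inhp} into the \emph{uniform} exponential bounds \eqref{ED+} or \eqref{ED-} demanded by Definition \ref{dichotomie}. The hypothesis $\liminf_{t\to+\infty}|\mathfrak{v}(t)|>0$ is what makes a single forcing sufficient: it prevents $\mathfrak{v}$ from decaying along the non\-hyperbolic directions and thereby manufacturing an accidental bounded solution in the absence of dichotomy, and the uniformity of the resulting constants in $t_{0}$ is exactly what separates a genuine exponential dichotomy from mere asymptotic decay of a particular solution. Since this converse Perron-type argument is carried out in detail in \cite[Prop.~2.2]{Amster}, I would either invoke it directly or reproduce only the uniformity estimate, which is the sole nontrivial ingredient.
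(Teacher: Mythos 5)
The paper offers no proof of Proposition \ref{Prop-Cop} to compare against: it is ``recalled'' as an adaptation of \cite[Prop.~2.2]{Amster}, and the appendix contains no argument for it. Measured against that, your reconstruction is at least as complete as the paper's treatment. The cycle (a)$\Rightarrow$(b)$\Rightarrow$(c)$\Rightarrow$(d)$\Rightarrow$(a) is the natural organization; your (a)$\Rightarrow$(b), with the bounded particular solutions $\int_{0}^{t}\Phi(t,s)\mathfrak{v}(s)\,ds$ in the case $\Sigma(\mathcal{D})\subset(-\infty,0)$ and $-\int_{t}^{+\infty}\Phi(t,s)\mathfrak{v}(s)\,ds$ in the case $\Sigma(\mathcal{D})\subset(0,+\infty)$, is correct, and (b)$\Rightarrow$(c)$\Rightarrow$(d) is indeed immediate with $\mathfrak{v}\equiv 1$. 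One logical point should be made explicit: the two biconditionals inside (b) are only half-proved by the computation in your first step. The ``if'' halves (e.g.\ all solutions bounded for every bounded forcing $\Rightarrow\Sigma(\mathcal{D})\subset(-\infty,0)$) require the cycle to be closed first, and then the exclusion argument: if $\Sigma(\mathcal{D})\subset(0,+\infty)$, nontrivial homogeneous solutions are unbounded by Proposition \ref{PROP}(b), so ``all solutions bounded'' cannot hold, while if $\Sigma(\mathcal{D})\subset(-\infty,0)$ every initial value gives a distinct bounded solution, so ``exactly one bounded'' cannot hold. You gesture at this with ``mutually exclusive and exhaust,'' but it belongs after (d)$\Rightarrow$(a), not inside (a)$\Rightarrow$(b).

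The step you leave schematic --- upgrading boundedness of one solution to the uniform estimates \eqref{ED+} or \eqref{ED-} in (d)$\Rightarrow$(a) --- is the genuine content, and in the scalar case it can be closed directly instead of outsourced to \cite{Amster}. In your divergent case, the bracket eventually exceeds $1$, so $\Phi(t,T_{0})\leq |x(t)|\leq M$ for large $t$, hence $\Phi(\cdot,T_{0})$ is bounded and variation of constants plus $\mathfrak{v}\geq m$ and positivity of $\Phi$ yield a constant $C\geq 1$ with $\int_{s}^{t}\Phi(t,\sigma)\,d\sigma\leq C$ for all $T_{0}\leq s\leq t$. Now use boundedness of $\mathcal{D}$: with $\omega:=\sup_{t\geq 0}|\mathcal{D}(t)|$, the cocycle identity $\Phi(t,\sigma)=\Phi(t,s)/\Phi(\sigma,s)$ and $\Phi(\sigma,s)\leq e^{\omega}$ for $\sigma\in[s,s+1]$ give the uniform bound $\Phi(t,s)\leq Ce^{\omega}$ for all $t\geq s\geq T_{0}$; then $(t-s)\Phi(t,s)=\int_{s}^{t}\Phi(t,\sigma)\Phi(\sigma,s)\,d\sigma\leq C^{2}e^{\omega}$, so $\Phi(t,s)\leq \tfrac{1}{2}$ once $t-s\geq T:=2C^{2}e^{\omega}$, and iterating over windows of length $T$ produces \eqref{ED+} with $\alpha=(\ln 2)/T$. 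In your convergent case the bounded solution is forced to equal $x(t)=-\int_{t}^{+\infty}\Phi(\sigma,t)^{-1}\mathfrak{v}(\sigma)\,d\sigma$, whence $m\int_{t}^{+\infty}\Phi(\sigma,t)^{-1}\,d\sigma\leq M$, and the same two-step argument applied to $\Phi(\sigma,t)^{-1}$ (which satisfies the same cocycle and growth bounds) produces \eqref{ED-}. Passing from a dichotomy on $[T_{0},+\infty)$ to one on $[0,+\infty)$ only changes the constants, again because $\mathcal{D}$ is bounded. With this lemma inserted, your proof is complete and self-contained, which is more than the paper itself provides.
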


\begin{theorem}
\label{T2}
Let us consider the generalized Levins model described by \eqref{Levins1} where 
$t\mapsto c(t)$ and $t\mapsto e(t)$ are bounded, continuous and positive maps for any $t\in \mathbb{R}$. If $p(0)>0$ is the fraction of the occupied patches at time zero and $0\in [\beta^{-}(d),\beta^{+}(d)]$
then the fraction $t\mapsto p(t)$ of occupied patches at time $t\geq 0$ verifies
\begin{equation}
\label{AB+}
\liminf\limits_{t\to +\infty} p(t)=0.
\end{equation}
\end{theorem}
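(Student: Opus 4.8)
The plan is to exploit the same Bernoulli substitution $u=1/p$ used in the proof of Theorem \ref{T1}, which turns \eqref{Levins1} into the inhomogeneous linear equation \eqref{Levins2}, namely $u'=\mathcal{D}(t)u+c(t)$, whose solution through $u(0)=1/p(0)$ is given by \eqref{solucion}. Since $\Phi(t,s)>0$, $u(0)>0$ and $c(s)>0$, this solution satisfies $u(t)>0$ for every $t\geq 0$. Because $p(t)=1/u(t)$, proving \eqref{AB+} is equivalent to showing that $\limsup_{t\to+\infty}u(t)=+\infty$, and since $u$ is continuous on $[0,+\infty)$ this is in turn equivalent to showing that $u$ is \emph{unbounded} on $[0,+\infty)$: continuity forbids any escape on a compact interval, so unboundedness can only be produced as $t\to+\infty$.

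The heart of the argument is to read the hypothesis $0\in[\beta^{-}(\mathcal{D}),\beta^{+}(\mathcal{D})]$ through Remark \ref{consecuencia}, which says precisely that the linear equation \eqref{lin-v} does \emph{not} admit an exponential dichotomy on $[0,+\infty)$. I would then invoke Proposition \ref{Prop-Cop}, in particular the equivalence between its items a) and d). Since a) fails, d) must fail as well; negating the existential statement in d) yields the universal conclusion that for \emph{every} bounded continuous forcing $\mathfrak{v}$ with $\liminf_{t\to+\infty}|\mathfrak{v}(t)|>0$, the equation \eqref{inhp} possesses \emph{no} bounded solution on $[0,+\infty)$. The colonization rate $c(\cdot)$ is exactly such a forcing, since \eqref{colonization-rest} gives $\liminf_{t\to+\infty}c(t)\geq c_{*}>0$. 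Applying the failure of d) with $\mathfrak{v}=c$ shows that $u'=\mathcal{D}(t)u+c(t)$ admits no bounded solution, so in particular our positive solution $u$ is unbounded.

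To finish I would combine the two previous paragraphs. The map $u$ is positive, continuous and unbounded on $[0,+\infty)$, hence there is a sequence $t_{n}\to+\infty$ with $u(t_{n})\to+\infty$, so that $p(t_{n})=1/u(t_{n})\to 0$; as $p(t)=1/u(t)>0$ for all $t$, this gives $\liminf_{t\to+\infty}p(t)=0$, which is exactly \eqref{AB+}.

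The step I expect to be the main obstacle is the correct deployment of Proposition \ref{Prop-Cop}: the argument hinges on the \emph{persistence} clause $\liminf|\mathfrak{v}|>0$ built into item d), because it is precisely this feature that allows the negation of an existence statement to deliver a universal nonexistence of bounded solutions valid for the specific forcing $c$. One must also check carefully that $u$ stays strictly positive and cannot blow up in finite time, so that ``unbounded solution'' genuinely translates into $\limsup_{t\to+\infty}u=+\infty$ and hence into $\liminf_{t\to+\infty}p=0$. Finally, it is worth stressing why this approach yields information only on the lower limit of $p$: unboundedness of $u$ controls $\limsup u$ from below but says nothing about $\liminf u$, so $\limsup p$ genuinely remains out of reach by this method, consistently with the open problems mentioned above.
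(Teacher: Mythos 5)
Your proposal is correct and follows the same skeleton as the paper's proof: the Bernoulli substitution $u=1/p$ turning \eqref{Levins1} into \eqref{Levins2}, positivity (the paper uses $u>1$, coming from $0<p(t)<1$, where you use $u>0$ from the explicit formula \eqref{solucion}), Remark \ref{consecuencia} to deny the exponential dichotomy of \eqref{lin-v}, and Proposition \ref{Prop-Cop} to conclude that $u$ is unbounded above, hence $\liminf_{t\to+\infty}p(t)=0$. The one genuine difference is which clause of Proposition \ref{Prop-Cop} carries the load. The paper appeals to item b); but, strictly speaking, negating b) only produces \emph{some} bounded continuous forcing $\mathfrak{v}$ for which \eqref{inhp} has no bounded solution, and this does not by itself exclude that the particular forcing $\mathfrak{v}=c$ admits one. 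You instead negate item d), whose existential form turns into the universal statement that no forcing with $\liminf_{t\to+\infty}|\mathfrak{v}(t)|>0$ admits a bounded solution; since \eqref{colonization-rest} gives $\liminf_{t\to+\infty}c(t)\geq c_{*}>0$, this applies directly to $\mathfrak{v}=c$. So your deployment of the key lemma is the logically tighter one, and it is the reading under which the paper's own appeal to Proposition \ref{Prop-Cop} should be understood; the remaining steps (global existence and continuity of $u$ for the linear equation, positivity, and unboundedness forcing $u(t_{n})\to+\infty$ along some $t_{n}\to+\infty$) are handled correctly and match the paper's conclusion.
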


\begin{proof}
By following the lines to the proof of Theorem \ref{T1}, we know that $t\mapsto p(t)$ with
$0<p(0)<1$ is solution of (\ref{Levins1}) if and only if $t\mapsto u(t)=1/p(t)$ is solution of the inhomogeneous
equation (\ref{Levins2}). Moreover, as we know that $0<p(t)<1$ for any $t\geq 0$, it follows that any solution of (\ref{Levins2}) with $u(0)>1$ is lowerly bounded.

By Remark \ref{consecuencia}, we know that the linear system (\ref{lin-v}) 
does not have an exponential dichotomy on $[0,+\infty)$, then the statement b) of 
Proposition \ref{Prop-Cop} implies that
(\ref{Levins2}) cannot have upperly bounded solutions on $[0,+\infty)$, which leads to
$$
\limsup\limits_{t\to +\infty}u(t)=+\infty \quad \textnormal{which is equivalent to} \quad
\liminf\limits_{t\to +\infty}p(t)=0,
$$
and the Theorem follows.
\end{proof}

\subsection{An example of weakly persistent metapopulation}
In the context of the previous proof, we only have deduced that $t\mapsto u(t)$ is upperly unbounded. Nevertheless,
we point out that the existence of $\ell\geq 1$ such that
\begin{equation}
\label{UP}
\liminf\limits_{t\to +\infty}u(t)=\ell>0
\end{equation}
is theoretically possible and will implies the weak persistence of the metapopulation, namely, $\limsup\limits_{t\to +\infty}p(t)=1/\ell>0$.

In order to provide an example of weak persistence, let us consider a classical metapopulation described by (\ref{Levins1})
having fixed extinction rate $e>0$ and a colonization rate
defined as follows:
\begin{equation}
\label{tam}
c(t)=\left\{\begin{array}{rcl}
e+\Delta & \textnormal{if} &  t\in [T_{2k},T_{2k+1}) \\\\
e-\Delta & \textnormal{if} & t\in [T_{2k+1},T_{2k+2}),
\end{array}\right.
\end{equation}
where $\Delta\in (0,e)$, $p(0)<\Delta/(e+\Delta)$ and the sequence $\{T_{n}\}_{n}$ is described by:
\begin{displaymath}
T_{0}=0 \quad \textnormal{and} \quad T_{1}= \frac{1}{\Delta}
\ln\left(\frac{1}{\varepsilon_{1}}\left[\frac{1}{p(0)}
-\frac{(e+\Delta)}{\Delta}\right]\right)   
\end{displaymath}
and the general times:
\begin{displaymath}
\begin{array}{rcl}
T_{2n}&=& \displaystyle T_{2n-1}+\frac{1}{\Delta} \ln \left(\frac{\frac{1}{\varepsilon_{2n}}
+\frac{e-\Delta}{\Delta}}{\varepsilon_{2n-1}+\frac{2e}{\Delta}}\right)\\\\
T_{2n+1} &=& T_{2n}+\frac{1}{\Delta}
\ln\left(\frac{1}{\varepsilon_{2n+1}}\left[\frac{1}{\varepsilon_{2n}}
-\frac{(e+\Delta)}{\Delta}\right]\right),
\end{array}
\end{displaymath}
where $\varepsilon_{n}$ is a positive sequence convergent to zero satisfying the restriction:
\begin{equation}
\label{ligadura}
\varepsilon_{2n}<\frac{\Delta}{\Delta(1+\varepsilon_{2n-1})+e}.
\end{equation}

As $t\mapsto p^{-1}(t)=u(t)$ is solution of (\ref{Levins2}) 
with $e(t)=e$ and $c(t)$ described by (\ref{tam}), a consequence of
(\ref{ligadura}) is that $t\mapsto u(t)$ is decreasing on $(T_{2n},T_{2n+1})$ whereas is increasing on $(T_{2n-1},T_{2n})$. In addition, it can be verified that
$$
u(T_{2n-1})=\frac{e+\Delta}{\Delta}+\varepsilon_{2n-1} \quad
\textnormal{and} \quad u(T_{2n})=\frac{1}{\varepsilon_{2n}}.
$$

In consequence, if a colonization rate $t\mapsto c(t)$ is described by the function (\ref{tam})
the existence of a solution $t\mapsto p(t)$
of (\ref{Levins1}) satisfying $\limsup\limits_{t\to +\infty}p(t)=\frac{\Delta}{e+\Delta}$
is possible and the weak persistence can be observed.

\end{document}